 \def\blfootnote{\xdef\@thefnmark{}\@footnotetext}
\def\UrlAlphabet{
      \do\a\do\b\do\c\do\d\do\e\do\f\do\g\do\h\do\i\do\j
      \do\k\do\l\do\m\do\n\do\o\do\p\do\q\do\r\do\s\do\t
      \do\u\do\v\do\w\do\x\do\y\do\z\do\A\do\B\do\C\do\D
      \do\E\do\F\do\G\do\H\do\I\do\J\do\K\do\L\do\M\do\N
      \do\O\do\P\do\Q\do\R\do\S\do\T\do\U\do\V\do\W\do\X
      \do\Y\do\Z}
\def\UrlDigits{\do\1\do\2\do\3\do\4\do\5\do\6\do\7\do\8\do\9\do\0}
\g@addto@macro{\UrlBreaks}{\UrlOrds}
\g@addto@macro{\UrlBreaks}{\UrlAlphabet}
\g@addto@macro{\UrlBreaks}{\UrlDigits}
\newcommand\munderbar[1]{\underaccent{\bar}{#1}}
\newtheorem{proposition}{Proposition}
\newtheorem*{proof}{Proof}
\newtheorem{remark}{Remark}
\def\ps@IEEEtitlepagestyle{%
  \def\@oddfoot{\mycopyrightnotice}%
  \def\@oddhead{\hbox{}\@IEEEheaderstyle\leftmark\hfil\thepage}\relax
  \def\@evenhead{\@IEEEheaderstyle\thepage\hfil\leftmark\hbox{}}\relax
  \def\@evenfoot{}%
}
\def\mycopyrightnotice{%
  \begin{minipage}{\textwidth}
  \centering \scriptsize
  Copyright~\copyright~2021 IEEE. Personal use of this material is permitted. Permission from IEEE must be obtained for all other uses, in any current or future media, including\\reprinting/republishing this material for advertising or promotional purposes, creating new collective works, for resale or redistribution to servers or lists, or reuse of any copyrighted component of this work in other works by sending a request to pubs-permissions@ieee.org.
  \end{minipage}
}
\begin{document}

\title{Spectral Graph Theory Based Resource Allocation for IRS-Assisted Multi-Hop Edge Computing}


\author{\IEEEauthorblockN{Huilian Zhang$^{\dagger}$,~Xiaofan He$^{\dagger}$,~Qingqing Wu$^{\ddagger}$,~and~Huaiyu Dai$^{*}$}
\IEEEauthorblockA{$^{\dagger}$School of Electronic Information, Wuhan University, Wuhan 430072, China, \\$^{\ddagger}$State Key Laboratory of Internet of Things for Smart City, University of Macau, Macao 999078, China, \\$^{*}$Department of Electrical and Computer Engineering, North Carolina State University, Raleigh, NC 27606, USA}
\IEEEauthorblockA{E-mails: \{huilianzhang, xiaofanhe\}@whu.edu.cn,~qingqingwu@um.edu.mo,~hdai@ncsu.edu}
}
 
\maketitle

\begin{abstract}
The performance of mobile edge computing (MEC) depends critically on the quality of the wireless channels. From this viewpoint, the recently advocated intelligent reflecting surface (IRS) technique that can proactively reconfigure wireless channels is anticipated to bring unprecedented performance gain to MEC. In this paper, the problem of network throughput optimization of an IRS-assisted multi-hop MEC network is investigated, in which the phase-shifts of the IRS and the resource allocation
of the relays need to be jointly optimized. However, due to the coupling among the transmission links of different hops caused by the utilization of the IRS and the complicated multi-hop network topology, it is difficult to solve the considered problem by directly applying existing optimization techniques. Fortunately, by exploiting the underlying structure of the network topology and spectral graph theory, it is shown that the network throughput can be well approximated by the second smallest eigenvalue of the network Laplacian matrix. This key finding allows us to develop an effective iterative algorithm for solving the considered problem. Numerical simulations are performed to corroborate the effectiveness of the proposed scheme.
\end{abstract}

\IEEEpeerreviewmaketitle


{\blfootnote{This work was supported in part by the NSFC Grant No. 61901305, the Wuhan University Start-up Grant No. 1501–600460001, the grants SRG2020-00024-IOTSC and FDCT 0108/2020/A, as well as the NSF grants ECCS-1444009 and CNS-1824518.}}

\section{Introduction}

\noindent To support the various computation-intensive services in the new information era, mobile-edge computing (MEC) \cite{mao2017survey} has emerged as a promising computing paradigm. The core idea of MEC is to push the computing resource to the network edge and allow mobile devices (MDs) to offload their computation tasks to the nearby edge servers for further processing \cite{he2019peace, wang2016mobile, mahmoodi2019optimal}. As a result, its performance depends critically on the quality of the communication links between the MDs and edge servers.

As a recently emerging communication technology, an intelligent reflecting surface (IRS) \cite{wu2019towards, wu2019intelligent, wu2019beamforming, zhou2020robust, di2020hybrid, li2020reconfigurable} is composed of a large number of passive reflection elements, and each of the elements can induce phase-shift of the incident signals. By coordinating the reflections of all elements, the reflected signals can add constructively at a desired point, thus improving the transmission link quality. Driven by these appealing features, there is a recent surge of interests in studying IRS-assisted MEC systems \cite{bai2020latency, liu2020intelligent, bai2020resource, cao2019intelligent} where an IRS is deployed to assist task offloading by mitigating the signal propagation induced impairments and improving the quality of the corresponding transmission links. For example, in \cite{bai2020latency} and \cite{liu2020intelligent}, IRS is leveraged to assist the task offloading from single-antanna MDs to an edge server co-located with a multi-antenna access point (AP). In \cite{bai2020resource},  IRS is employed in a wireless-powered MEC network to improve the links both for wireless energy transfer and task offloading. In addition, IRS is adopted to improve the wireless channel quality of a millimeter-wave-MEC system in \cite{cao2019intelligent} for a lower task offloading latency.

Nevertheless, the research on IRS-assisted MEC is still in its infancy, and the existing relevant works are mainly focused on single-hop MEC. In many scenarios, multi-hop MEC is needed to achieve more satisfactory performance. For example, in a disaster site or urban areas with poor cellular coverage, users may need to request computing services from a remote edge server through a multi-hop ralay network. However, multi-hop MEC is more sensitive to propagation induced impairments as compared to the single-hop case. To the best of our knowledge, how to leverage IRS to improve the performance of multi-hop MEC still remains largely unexplored. 

Motivated by the above, an IRS-assisted multi-hop MEC network is studied in this paper. Specifically, to maximize the network throughput, the phase-shifts of the IRS as well as the power and bandwidth allocation of the relays in the edge network need to be jointly optimized. However, this is highly non-trivial. In particular, the utilization of the IRS brings coupling among the transmission links of different hops. Besides, the complicated multi-hop network topology makes it more difficult to derive a closed-form expression of the network throughput, which thus prevents us from directly applying existing optimization techniques to solve the considered problem.

To overcome this technical challenge, the network throughput optimization of the considered IRS-assisted multi-hop MEC network is converted into a max-flow problem in a
directed-graph. By exploiting the underlying structure of the network topology, it is shown that the max-flow in this directed graph is equivalent to that of its undirected counterpart. Based on prior results from spectral graph theory \cite{chung1997spectral,ford1956maximal,bhattacharya2010graph,he2014dynamic}, a direct consequence of this key finding is that the network throughput can be well approximated by the second smallest eigenvalue of the network Laplacian matrix, whose gradient can be readily computed. This in turn allows us to effectively increase the network eigenvalue (and thus approximately the throughput) by adjusting the IRS phase-shifts as well as the power and bandwidth allocation via the generic gradient descent method. To the best of our knowledge, this work is among the first to investigate IRS-assisted multi-hop MEC and analyze its throughput from the perspective of spectral graph theory.


\section{System Model and Problem Formulation}
\noindent In this section, the system model will be presented first, followed by the problem formulation. 

\begin{figure} [!thb]
\centering
\vspace{-.3cm}
\includegraphics[width=0.8\columnwidth]{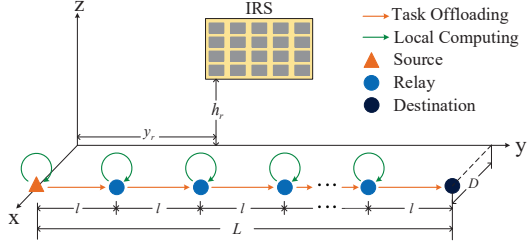}
\vspace{-.3cm}
\caption{An IRS-assisted multi-hop MEC system.}
\label{fig_sim}
\vspace{-.6cm}
\end{figure}

\subsection{System Model}
\noindent Consider an IRS-assisted multi-hop MEC system, as depicted in Fig. 1. In this scenario, a terrestrial MD (i.e., the source node) intends to offload its tasks to a remote edge server (i.e., the destination node). In addition, it is assumed that there are multiple MDs in between and they can form a multi-hop edge relay network. For the ease of description, the source node, the relay nodes, and the destination node are numbered from $1$ to $N$; define $\mathcal N\buildrel \Delta \over =\{1,2,...,N-1\}$. In contrast to the conventional relays \cite{ju2012full} that merely perform forwarding, the edge relays (i.e., the intermediate MDs) considered in this work can also help execute part of their received computation tasks by local computing. Specifically, node-$i$ ($i\in \mathcal N$) can locally compute a portion of the tasks and offload the rest to node-$(i+1)$. To facilitate task offloading, an off-the-shelf IRS with $M$ reflection elements is deployed in the environment (e.g., on the facade of nearby building). 

Denote by $L$ the source-to-destination (s-d) distance. As depicted in Fig. 1, the source node, the relay nodes, and the destination node are assumed to be located in a line. The locations of the source and the destination nodes are denoted by $(D,0,0)$ and $(D,L,0)$, respectively. The IRS is assumed to be a uniform rectangular array (URA) with $M_z$ rows and $M_y$ columns, and is located in the y-z plane. Without loss of generality, the bottom-left reflection element of the IRS is taken as the reference point to represent the location of the IRS, which is assumed to be $h_r$ meters (m) above the ground. The horizontal coordinate of node-$i$ ($i\in \mathcal N$) is given by ${{\bf{q}}_i} = {[D,(i - 1) \cdot l]^{\rm T}}$, with $l = L/(N - 1)$. The horizontal coordinate of the IRS is ${{\bf{q}}_r} = {[0,{y_r}]^{\rm T}}$, with $0 < {y_r} < L$.

\subsubsection{Task Offloading Model}
Assisted by the IRS, the channel ${H_i} \in \mathbb{C}$ from node-$i$ to node-$(i+1)$ is given by
\begin{equation}
{H_i} = {\bf{h}}_{r,i + 1}^H\boldsymbol{\Theta} {{\bf{h}}_{i,r}} + {h_{i,i + 1}},
\end{equation}
with ${h_{i,i + 1}} \in \mathbb{C}$ denoting the coefficient of the direct channel from node-$i$ to node-$(i+1)$, ${{\bf{h}}_{i,r}} \in \mathbb{C}{^{M \times 1}}$ denoting the channel vector from node-$i$ to the IRS, and ${\bf{h}}_{r,i + 1}^H \in \mathbb{C}{^{1 \times M}}$ being the channel vector from the IRS to node-$(i+1)$. The diagonal matrix $\boldsymbol{\Theta} {\rm{ = diag(}}{e^{j{\theta _1}}}{\rm{,}}{e^{j{\theta _2}}}{\rm{,}} \ldots {\rm{,}}{e^{j{\theta _M}}}{\rm{)}}$ is the phase-shift matrix of the IRS, with ${\theta _m \in [0, 2\pi)}$ being the phase-shift of the $m$-th ($m\in \mathcal M$, $\mathcal M\buildrel \Delta \over =\{1,2,...,M\}$) element of the IRS, and $j \buildrel \Delta \over = \sqrt { - 1}$ being the imaginary unit. To avoid inter-channel interference, frequency-division multiple access is adopted for task offloading. Denote by $B$ the total available system bandwidth, and $\eta_i\in(0, 1)$ the fraction of bandwidth allocated to the link from node-$i$ to node-$(i+1)$. Then, it follows that $\sum\nolimits_{i = 1}^{N - 1} {{\eta _i}}  = 1$. The transmission data rate $R_i^o$ for task offloading from node-$i$ to node-$(i+1)$ is given by
\begin{equation}
R_i^o = {\eta _i}B{\log _2}\left( {1 + ({\mu _i}{P_i}{{\left| {{H_i}} \right|}^2})/({\eta _i}B{N_0})} \right),
\end{equation}
with $N_0$ denoting the noise power spectral density, $P_i$ denoting the total power for task offloading and local computing of node-$i$, and $\mu_i$ ($0 < {\mu _i} < 1$) denoting the fraction of power allocated for task offloading. Therefore, the power/bandwidth allocation vectors of all the MDs can be compactly written as ${\boldsymbol{\mu }} \buildrel \Delta \over = {[{\mu _1},{\mu _2}, \ldots ,{\mu _{N - 1}}]^{\rm T}}$ and ${\boldsymbol{\eta }} \buildrel \Delta \over = {[{\eta _1},{\eta _2}, \ldots ,{\eta _{N - 1}}]^{\rm T}}$, respectively.

\subsubsection{Channel Model}
The wireless links between the IRS and the MDs are assumed to be dominated by LoS components due to the flexible IRS deployment. Rician fading is assumed for the direct link between eack pair of MDs, since there are usually more scatters and obstacles near the ground \cite{wu2019intelligent}. Denote the distances between node-$i$ and the IRS and that between the IRS and node-$(i+1)$ by $d_{i,r}$ and $d_{r,i}$, respectively. The channel vectors from node-$i$ to the IRS and that from the IRS to node-$(i+1)$ are given respectively by ${{\bf{h}}_{i,r}} = \sqrt {\rho {{\left( {{{{d_{i,r}}}}/{{{d_0}}}} \right)}^{ - {\alpha _1}}}} {{\bf{\bar h}}_{i,r}}$ and ${\bf{h}}_{r,i}^H = \sqrt {\rho {{\left( {{{{d_{r,i}}}}/{{{d_0}}}} \right)}^{ - {\alpha _1}}}} {\bf{\bar h}}_{r,i}^H$, with $\rho$ the path-loss at the reference distance ${d_0} = 1$ (m) and $\alpha_1$ the path-loss exponent of the LoS links. Here, ${{\bf{\bar h}}_{i,r}} \in \mathbb{C}{^{M \times 1}}$ and ${\bf{\bar h}}_{r,i}^H \in \mathbb{C}{^{1 \times M}}$ represent respectively the corresponding normalized LoS paths \cite{wu2019intelligent}. The channel coefficient from node-$i$ to node-$(i+1)$ is given by
\begin{equation}
{h_{i,i + 1}} = \chi  \cdot \left( {\sqrt {\frac{\beta }{{1 + \beta }}} h_{i,i + 1}^{{\rm{LoS}}} + \sqrt {\frac{1}{{1 + \beta }}} h_{i,i + 1}^{{\rm{NLoS}}}} \right),
\end{equation}
with $\chi  = \sqrt {\rho {{\left( {{\textstyle{{{d_{i,i + 1}}} / {{d_0}}}}} \right)}^{ - {\alpha _2}}}}$ capturing the path-loss effect. The notations ${\alpha _2}$ and $\beta$ denote the corresponding path loss exponent and the Rician factor, respectively, and $d_{i,i+1}$ is the distance between this pair of nodes. ${h_{i,i + 1}^{{\rm{LoS}}}}$ and ${h_{i,i + 1}^{{\rm{NLoS}}}}$ denote respectively the normalized LoS component with unit modulus and the circularly-symmetric-complex-Gaussian distributed (with zero mean and unit variance) normalized NLoS (Rayleigh fading) component.

\subsubsection{Local Computing Model}
Based on the above discussions, the power for local computing at node-$i$ is given by $P_i^l = (1 - {\mu _i}){P_i}$. It follows that $P_i^l = \kappa f_i^3$, with $f_i$ the corresponding CPU frequency, and $\kappa$ a coefficient depending on the chip architecture \cite{he2019peace,zhang2013energy} . Accordingly, it has ${f_i} = \sqrt[3]{{P_i^l/\kappa }}$. Assume that $\varepsilon$ CPU cycles are required to process one bit of task locally \cite{he2019peace}, and hence the data rate of local computing at node-$i$ is given by
\begin{equation}
R_i^l = \sqrt[3]{{P_i^l/\kappa }}/\varepsilon  = \sqrt[3]{{(1-\mu_i){P_i}/\kappa }}/\varepsilon. 
\end{equation}

\subsection{Problem Formulation}
\noindent To maximize the throughput (i.e., the number of task processed in unit time) of the IRS-assisted multi-hop MEC network, a max-flow problem \cite{ford1956maximal} is formulated as follows. In particular, the task-handling procedure of the considered IRS-assisted multi-hop MEC network can be represented by a directed graph $G_d$ as depicted in Fig. 2. In $G_d$, the solid lines with arrows represent single-hop task-offloading, while the dotted lines with arrows represent local computing. The associated edge weights $R^o_i$ and $R^l_i$ are determined by the task offloading rate and the local computing rate specified in (2) and (4), respectively. The rationale behind the virtual edges represented by the dotted lines is as follows: When a task is locally computed by a certain node-$i$, it is equivalent to, in terms of network throughput, the case that this task is sent to the server with negligible transmission delay and then processed there.\footnote{Similar to existing literature \cite{zhang2017energy,wang2017computation}, it is assumed in this work that the time of sending the computation results back to the source is negligible.}

\begin{figure} [!htb]
\vspace{-.4cm}
\centering
\includegraphics[width=0.8\columnwidth]{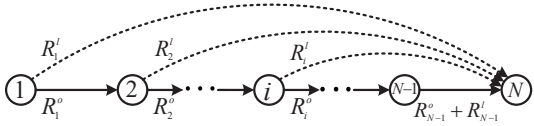}
\vspace{-.3cm}
\caption{The directed graph $G_d$ for the considered multi-hop MEC network.}
\label{fig_sim}
\vspace{-.2cm}
\end{figure}

From the above description, it is not difficult to verify that the throughput of this IRS-assisted multi-hop MEC network is essentially the max-flow $F_{s,d}$ from the source node (i.e., node-$1$) to the destination node (i.e., node-$N$) in the directed graph $G_d$. Hence, the objective of this work is to maximize the max-flow $F_{s,d}$ by jointly optimizing the phase-shifts ${\boldsymbol{\Theta }}$ of the IRS, the power allocation ${\boldsymbol{\mu }}$ of the MDs, and the system bandwidth allocation ${\boldsymbol{\eta }}$. Mathematically, the problem can be formulated as

\begin{subequations}
\begin{align}
\text{(P1):}{\kern 6pt}&\mathop {\max }\limits_{{\boldsymbol{\Theta }},{\boldsymbol{\mu }},{\boldsymbol{\eta }}} {F_{s,d}}\\
\text{s.t.}{\kern 5pt}&0 \le {\theta _m} \le 2\pi ,{\kern 15pt} \forall m\in \mathcal M,\\
&0 \leq {\mu _i} , {\eta _i} \leq 1, {\kern 12pt} \forall i \in \mathcal N,\\
&\sum\nolimits_{i = 1}^{N - 1} {{\eta _i}}  = 1.
\end{align}
\end{subequations}
One of the main difficulties in solving the above problem is that there exists no closed-form expression of  $F_{s,d}$, which makes the conventional optimization methods inapplicable. To overcome this difficulty, a spectral graph theory based optimization method will be proposed in the next section. 

\section{The Proposed Scheme}
\noindent In this section, a spectral graph theory based method is developed to solve the optimization problem (P1). The basic idea of the proposed solution is to first use an appropriate spectral graph quantity to bound the objective function in (P1), and then use the gradient descent method to gradually optimize this spectral graph quantity, hoping to achieve a reasonably good solution of (P1). 

In particular, according to prior results on spectral graph theory \cite{he2014dynamic}, the max-flow of an undirected graph can be properly approximated by the so-called weighted Cheeger's constant. However, the graph $G_d$ in the considered problem is a directed one. To this end, consider another graph $\tilde G$ that is identical to $G_d$, except that all the edges in $\tilde G$ are {\it{undirected}}. As shown in Proposition 1, due to its special topological structure, the max-flow $F_{s,d}$ in $G_d$ is equal to the max-flow ${\tilde F_{s,d}}$ in $\tilde G$.

\begin{proposition}
For any directed graph $G_d$ with the structure as shown in Fig. 2, its max-flow from the source to the destination equals that of the undirected graph $\tilde G$.
\end{proposition}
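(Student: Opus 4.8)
The plan is to prove the two inequalities $\tilde F_{s,d}\ge F_{s,d}$ and $\tilde F_{s,d}\le F_{s,d}$ separately, relying throughout on the max-flow--min-cut theorem, which holds for both directed and undirected graphs. The first inequality is immediate: every feasible source-to-destination flow in $G_d$ is also feasible in $\tilde G$, since undirecting the edges only enlarges the set of admissible routings, so the maximum over $\tilde G$ is at least that over $G_d$. The entire difficulty therefore lies in the reverse inequality $\tilde F_{s,d}\le F_{s,d}$, i.e., in showing that the extra freedom of sending flow ``backward'' along an edge cannot increase the throughput for a graph of the form in Fig.~2.

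To handle the reverse inequality I would argue with cuts rather than flows. Label the nodes $1,\dots,N$ as in the text, with source $s=1$ and sink $d=N$, and recall the two salient features of $G_d$: every edge is directed from a lower-indexed node to a higher-indexed one, because each node $i$ has out-edges only to its successor $i+1$ (weight $R_i^o$) and to the sink $N$ (weight $R_i^l$). For a partition $(S,T)$ with $1\in S$ and $N\in T$, the directed cut charges an edge only when it is oriented from $S$ to $T$, whereas the undirected cut charges it whenever its endpoints are separated. The key observation is that these two quantities coincide on any \emph{prefix} cut $S=\{1,\dots,k\}$: the only separated path edge is $\{k,k+1\}$ (oriented forward), and the separated local-computing edges $\{i,N\}$, $i\le k$, are forward as well, so no backward-oriented edge is ever charged. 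Hence, on prefix cuts, the undirected and directed capacities are both equal to $R_k^o+\sum_{i=1}^{k}R_i^l$.

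The crux is the claim that the minimum directed cut can always be taken to be such a prefix. I would prove this by an exchange argument: given an arbitrary cut $S\ni 1$, let $k$ be the length of its initial run $\{1,\dots,k\}\subseteq S$ (so $k+1\in T$) and replace $S$ by the prefix $S'=\{1,\dots,k\}$. Passing from $S$ to $S'$ can only delete the local-computing edges $\{i,N\}$ with $i\in S\setminus S'$ from the cut, and on the path it leaves the forward edge at $k$ intact while removing every separated path edge beyond index $k+1$; crucially, because each node's only out-edges go to $i+1$ or to $N$, no edge from $S'$ can terminate inside $S\setminus S'$, so no new separated edge is created. Therefore the directed capacity does not increase, and the minimum directed cut is attained by some prefix $S^{*}=\{1,\dots,k^{*}\}$. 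This step is where the special topology is indispensable: for a general forward-oriented DAG the exchange can fail, since a long forward edge from $S'$ into $S\setminus S'$ would be newly separated.

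Putting the pieces together finishes the proof. By max-flow--min-cut in $G_d$, $F_{s,d}$ equals the minimum directed cut, which by the exchange argument is the prefix $S^{*}$; since $S^{*}$ is a prefix, its undirected and directed capacities agree, so $\tilde F_{s,d}$, being the minimum undirected cut, satisfies $\tilde F_{s,d}\le \mathrm{cap}_{\tilde G}(S^{*})=\mathrm{cap}_{G_d}(S^{*})=F_{s,d}$. Combining this with the trivial inequality yields $\tilde F_{s,d}=F_{s,d}$. I expect the exchange argument of the third paragraph to be the main obstacle, both because it is the only place the structural hypothesis on $G_d$ is genuinely used and because one must verify carefully that shrinking a cut to its initial prefix creates no new separated edges.
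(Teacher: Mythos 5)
Your proof is correct, but note that the paper does not actually contain its own proof of Proposition~1 --- it defers to Appendix~A of an external reference \cite{theproof} --- so no line-by-line comparison is possible; judged on its own merits, your cut-based argument is sound and complete. The easy direction $\tilde F_{s,d}\ge F_{s,d}$ is right (undirecting edges only enlarges the set of feasible flows), and your exchange step, which is indeed the crux, checks out: with $S'=\{1,\dots,k\}$ the initial run of an arbitrary cut $S$ (so $k+1\notin S$), every out-edge of a node $i\le k$ terminates either at $i+1\le k+1$ or at $N$, and both $k+1$ and $N$ lie outside $S$, hence outside $S\setminus S'$; therefore shrinking $S$ to $S'$ creates no new directed cut edge while discarding the local-computing edges $\{i,N\}$ with $i\in S\setminus S'$ and any separated path edges beyond index $k$, giving $\mathrm{cap}_{G_d}(S')=R_k^o+\sum_{i=1}^{k}R_i^l\le \mathrm{cap}_{G_d}(S)$. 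Since on a prefix cut no edge crosses backward, $\mathrm{cap}_{\tilde G}(S')=\mathrm{cap}_{G_d}(S')$, and the chain $\tilde F_{s,d}\le \mathrm{cap}_{\tilde G}(S^*)=\mathrm{cap}_{G_d}(S^*)=F_{s,d}$ closes the argument. Two minor points you should make explicit in a polished write-up: (i) you invoke max-flow--min-cut for \emph{undirected} graphs, which is standard (replace each undirected edge by two antiparallel arcs of equal capacity) but deserves a sentence or a citation to \cite{ford1956maximal}; (ii) the adjacency definition in (6) merges the two parallel edges from node~$N-1$ to node~$N$ into the single weight $R_{N-1}^o+R_{N-1}^l$, and it is worth observing that your prefix-capacity formula $R_k^o+\sum_{i=1}^{k}R_i^l$ remains consistent at the boundary case $k=N-1$.
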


\begin{proof}
Please see Appendix A in\cite{theproof}. $\hfill\blacksquare$ 
\end{proof}

\begin{remark}
\textup{The main merit of this result is to allow us to improve the max-flow $F_{s,d}$ (or, equivalently, the network throughput) in the directed graph $G_d$ by using spectral graph theoretic methods, as elaborated below.}
\end{remark}

\subsection{Preliminary on the Weighted Cheeger's Constant}
\noindent For the undirected graph $\tilde G$, its adjacency matrix \cite{bhattacharya2010graph} is defined as ${\bf{A}} \buildrel \Delta \over = [{a_{ij}}]_{i,j = 1}^N$, where
\begin{equation}
{a_{ij}} \buildrel \Delta \over = \left\{ \begin{array}{l}
R_i^o,{\kern 1pt} \forall i\in \mathcal N\setminus \{N-1\},{\kern 1pt} j = i + 1{\kern 1pt} ,\\
R_i^l,{\kern 2pt}  \forall i\in \mathcal N\setminus \{N-1\},{\kern 1pt} j = N{\kern 1pt} ,\\
R_{N - 1}^o + R_{N - 1}^l,{\kern 1pt} i = N - 1,{\kern 1pt} j = N{\kern 1pt} ,\\
0,{\kern 1pt} {\rm{otherwise}}{\kern 1pt} .
\end{array} \right.
\end{equation}
The elements in the lower triangular part of ${\bf{A}}$ can be similarly obtained by interchanging the subscripts $i$ and $j$. Given the adjacency matrix ${\bf{A}}$, the max-flow ${\tilde F_{s,d}}$ of $\tilde G$ is given by \cite{ford1956maximal}
\begin{equation}
{\tilde F_{s,d}}  = \mathop {\min }\limits_{\left\{ {S|{s} \in S,{d} \in \bar S} \right\}} \sum\nolimits_{i \in S,j \in \bar{S}} {{a_{ij}}},
\end{equation}
where $S$ is a subset of nodes in $\tilde G$ and $\bar{S}$ denotes its complement. 
Due to the combinational nature of (7), it is quite difficult to directly optimize ${\tilde F_{s,d}}$ (by adjusting the phase-shifts ${\boldsymbol{\Theta }}$ of the IRS, the power allocation ${\boldsymbol{\mu }}$ of the MDs, and the system bandwidth allocation ${\boldsymbol{\eta }}$). 

Prior results from spectral graph theory \cite{chung1997spectral,he2014dynamic} reveal that by properly assigning large weights to the source node $s$ and the destination node $d$, the corresponding weighted Cheeger's constant $\mathscr{C}$ can be used as a good estimate of $\tilde F_{s,d}$. Specifically, $\mathscr{C}$ is defined as:
\begin{equation}
\mathscr{C} = \mathop {\min }\limits_S \frac{{\sum\nolimits_{i \in S,j \in \bar S} {{a_{ij}}} }}{{\min \left\{ {{{\left| S \right|}_W},{{\left| {\bar S} \right|}_W}} \right\}}},
\end{equation}
where ${\left| S \right|_W} = \sum\nolimits_{i \in S} {{w_i}}$ is the weighted cardinality, with ${w_i} \ge 0$ the weight assigned to node-$i$. In addition, the following weighted Cheeger's inequality holds \cite{he2014dynamic}
\begin{equation}
\lambda /2 \le \mathscr{C} \le \sqrt {2{\delta _{\max }}\lambda /{w_{\min }}},
\end{equation}
where ${w_{\min }} \buildrel \Delta \over = {\min _i}{w_i}$. The second smallest eigenvalue $\lambda$ of the weighted Laplacian matrix ${{\bf{L}}_W}$ is defined as 
\begin{equation}
\lambda \buildrel \Delta \over = \mathop {\inf }\limits_{\boldsymbol{g} \bot {{\bf{W}}^{1/2}}{\bf{1}}} \frac{{{\boldsymbol{g}^{\rm T}}{{\bf{L}}_W}\boldsymbol{g}}}{{{\boldsymbol{g}^{\rm T}}\boldsymbol{g}}},
\end{equation}
where ${{\bf{L}}_W} \buildrel \Delta \over = {{\bf{W}}^{ - 1/2}}{\bf{L}}{{\bf{W}}^{ - 1/2}}$, with the diagonal matrix ${\bf{W}} \buildrel \Delta \over = {\rm{diag}}\left\{ {{w_1}, \ldots ,{w_N}} \right\}$; ${\bf{L}} \buildrel \Delta \over = {\bf{D}} - {\bf{A}}$ is the Laplacian matrix of the undirected graph with ${\bf{D}} \buildrel \Delta \over = {\rm{diag}}\left\{ {{\delta _1}, \ldots ,{\delta _N}} \right\}$ the generalized degree matrix and ${\delta _i} \buildrel \Delta \over = \sum\nolimits_{\left\{ {j|j \ne i} \right\}} {{a_{ij}}}$.

\begin{remark}
\textup{The above results reveal that ${\tilde F_{s,d}}$ can be approximated by $\mathscr{C}$ and both the upper and the lower bounds of $\mathscr{C}$ are increasing in $\lambda$. Consequently, the max-flow ${\tilde F_{s,d}}$ in $\tilde G$ can be improved by increasing the corresponding second smallest eigenvalue $\lambda$ of the weighted Laplacian matrix. }
\end{remark}

\subsection{Joint Phase-Shifts, Power Allocation, and Bandwidth Allocation Optimization}

\noindent Based on the above discussions, an effective algorithm is developed in the following to obtain a good $\tilde{F}_{s,d}$ by increasing $\lambda $ through jointly optimizing the phase-shifts ${\boldsymbol{\Theta }}$ of the IRS, the power allocation ${\boldsymbol{\mu }}$ of the MDs, and the system bandwidth allocation ${\boldsymbol{\eta }}$.

 Let ${\boldsymbol{\theta }} = {\rm{diag\{ }}{\bf{\Theta }}\}$ be a column vector consisting of the elements on the main diagonal of the maxtrix ${\bf{\Theta }}$. It follows from (1), (2), (4), (6), and (10) that $\lambda$ is a function of the variables ${\boldsymbol{\Theta }}$, ${\boldsymbol{\mu }}$, and ${\boldsymbol{\eta }}$ , and for the ease of presentation, write $\lambda = \lambda({\bf{x}})$, where ${\bf{x}} = {[{{\boldsymbol{\theta }}^{\rm T}},{{\boldsymbol{\mu }}^{\rm T}},{{\boldsymbol{\eta }}^{\rm T}}]^{\rm T}}$. According to \cite{he2014dynamic}, if $\lambda$ is differentiable with respect to (w.r.t.) the variables ${\boldsymbol{\Theta }}$, ${\boldsymbol{\mu }}$, and ${\boldsymbol{\eta }}$, $\lambda$ can be improved by properly increasing the variables along the gradient direction $\nabla \lambda({\bf{x}}) = {\rm{\{ }}\partial \lambda/\partial {x_k}{\rm{\} }}_{k = 1}^{M + 2(N - 1)}$; here, $\{x_k\}_{k=1}^{M}$, $\{x_k\}_{k=M+1}^{M+N-1}$, and $\{x_k\}_{k=M+N}^{M+2(N-1)}$ correspond to ${\boldsymbol{\theta }}$, ${\boldsymbol{\mu }}$, and ${\boldsymbol{\eta }}$, respectively. In the following, the computation of $\nabla \lambda({\bf{x}})$ is given in detail. Specifically, the partial derivative of $\partial \lambda/\partial {x_k}$ is given by \cite{he2014dynamic}
\begin{equation}
\frac{{\partial \lambda}}{{\partial {x_k}}} = {{\bf{v}}^{\rm T}}\frac{{\partial {{\bf{L}}_W}}}{{\partial {x_k}}}{\bf{v}} = {\sum\limits_{i,j} {\left( {\frac{{{v_i}}}{{\sqrt {{w_i}} }} \!-\! \frac{{{v_j}}}{{\sqrt {{w_j}} }}} \right)} ^2}\frac{{\partial {a_{ij}}}}{{\partial {x_k}}},
\end{equation}
where $\bf{v}$ is the eigenvector of the matrix ${{\bf{L}}_W}$ corresponding to $\lambda$, ${v_i}$ (${v_j}$) is the $i$-th ($j$-th) component of $\bf{v}$. Therefore, to obtain the spatial gradient direction of $\lambda$, one need to compute $\{ \partial {a_{ij}}/\partial {x_k}\} _{k = 1}^{M + 2(N - 1)}$ in advance. By observing Fig. 2, it is not difficult to see that this boils down to compute the gradient of the offloading rate $\{ \partial R_i^o/\partial {x_k}\} _{k = 1}^{M + 2(N - 1)}$ and that of the local computing rate $\{ \partial R_i^l/\partial {x_k}\} _{k = 1}^{M + 2(N - 1)}$.

Note that the direct channel cofficient from node-$i$ to node-$(i+1)$ is given by ${h_{i,i + 1}} = \xi {e^{j\omega }}$, with $\xi$ and $\omega$ the amplitude and the phase of the complex element ${h_{i,i + 1}}$, respectively. Similarly, it has ${{\bf{h}}_{i,r}} = {[{\gamma _1}{e^{j{\varphi _1}}},{\gamma _2}{e^{j{\varphi _2}}}, \ldots ,{\gamma _M}{e^{j{\varphi _M}}}]^{\rm T}}$ and ${\bf{h}}_{r,i + 1}^H{\rm{ = [}}{\delta _1}{e^{j{\psi _1}}}{\rm{,}}{\delta _2}{e^{j{\psi _2}}}{\rm{,}} \ldots {\rm{,}}{\delta _M}{e^{j{\psi _M}}}{\rm{]}}$. Therefore, the channel ${H_i} \in \mathbb{C}$ from node-$i$ to node-$(i+1)$ can be written as
\begin{equation}
\begin{array}{l}
\!{H_i} = \sum\nolimits_{m = 1}^M {{\delta _m}{\gamma _m}{e^{j({\theta _m} + {\psi _m} + {\varphi _m})}} + \xi {e^{j\omega }}} \\
 {\kern 10pt}= \underbrace {\sum\nolimits_{m = 1}^M {{\delta _m}{\gamma _m}\sin ({\theta _m} + {\psi _m} + {\varphi _m}) + \xi \sin \omega } }_{{I_i}}\\
 {\kern 10pt}+ j\underbrace {( {\sum\nolimits_{m = 1}^M {{\delta _m}{\gamma _m}\cos ({\theta _m} + {\psi _m} + {\varphi _m})+\xi \cos \omega } } )}_{{Q_i}}
\end{array}.
\end{equation}

According to (2) and (12), the partial derivative of ${R_i^o}$ w.r.t. the phase-shift $\theta_m$ of the $m$-th reflection element is given by
\begin{equation}
\frac{{\partial R_i^o}}{{\partial {\theta _m}}} = \frac{{{\eta _i}B{\mu _i}{P_i}}}{{({\eta _i}B{N_0} + {\mu _i}{P_i}{{\left| {{H_i}} \right|}^2}) \cdot \ln 2}} \cdot \frac{{\partial {{\left| {{H_i}} \right|}^2}}}{{\partial {\theta _m}}},
\end{equation}
where
\begin{equation}
\begin{array}{l}
\frac{{\partial {{\left| {{H_i}} \right|}^2}}}{{\partial {\theta _m}}} = 2{I_i}{\delta _m}{\gamma _m}\cos ({\theta _m} + {\psi _m} + {\varphi _m})\\
{\kern 65pt} - 2{Q_i}{\delta _m}{\gamma _m}\sin ({\theta _m} + {\psi _m} + {\varphi _m}).
\end{array}
\end{equation}
Similarly, the other relevant partial derivatives are given by
\begin{equation}
\frac{{\partial R_i^o}}{{\partial {\mu _i}}} = \frac{{{\eta _i}B{P_i}{{\left| {{H_i}} \right|}^2}}}{{({\eta _i}B{N_0} + {\mu _i}{P_i}{{\left| {{H_i}} \right|}^2}) \cdot \ln 2}},
\end{equation}
\begin{equation}
\begin{array}{l}
\frac{{\partial R_i^o}}{{\partial {\eta _i}}} = B{\log _2}\left( {1 + \frac{{{\mu _i}{P_i}{{\left| {{H_i}} \right|}^2}}}{{{\eta _i}B{N_0}}}} \right)\\
{\kern 100pt} - \frac{{B{\mu _i}{P_i}{{\left| {{H_i}} \right|}^2}}}{{({\eta _i}B{N_0} + {\mu _i}{P_i}{{\left| {{H_i}} \right|}^2}) \cdot \ln 2}},
\end{array}
\end{equation}
and
\begin{equation}
\partial R_i^l/\partial {\mu _i} =  - {P_i}{\left( {(1 - {\mu _i}){P_i}/\kappa } \right)^{ - 2/3}}/(3\kappa \varepsilon ).
\end{equation}
Moreover, according to the local computing model specified in Section II-A, it has $\partial R_i^l/\partial {\theta _m} = 0$ and $\partial R_i^l/\partial {\eta _i} = 0$.

After obtaining the set of gradients ${\rm{\{ }}\partial \lambda/\partial {x_k}{\rm{\} }}_{k = 1}^{M + 2(N - 1)}$ by using (11)-(17), the value of $\lambda$ can be improved by adjusting ${\boldsymbol{\Theta }}$, ${\boldsymbol{\mu }}$, and ${\boldsymbol{\eta}}$ along their gradient directions $\nabla \lambda({\bf{x}}) = {\rm{\{ }}\partial \lambda/\partial {x_k}{\rm{\} }}_{k = 1}^{M + 2(N - 1)}$. Specifically, according to the first order Taylor expansion
\begin{equation}
\lambda({{\bf{x}}^0} + \Delta {\bf{x}}) = \lambda({{\bf{x}}^0}) + \nabla \lambda{({{\bf{x}}^0})^{\rm T}}\Delta {\bf{x}} + o(||\Delta {\bf{x}}||),
\end{equation}
for a given point ${{\bf{x}}^0}$, the value of $\lambda({{\bf{x}}^0} + \Delta {\bf{x}})$ can be improved by maximizing $\nabla \lambda{({{\bf{x}}_0})^{\rm T}}\Delta {\bf{x}}$. This indicates that $\lambda$ can be optimized by solving the following linear programming
\begin{subequations}
\begin{align}
\text{(P2):}{\kern 5pt}&\mathop {\max }\limits_{\Delta {\boldsymbol{\Theta }},\Delta {\boldsymbol{\mu }},\Delta {\boldsymbol{\eta }}} \sum\limits_{m = 1}^M {\frac{{\partial \lambda}}{{\partial {\theta _m}}}}  \cdot \Delta {\theta _m} + \sum\limits_{i = 1}^{N - 1} {\frac{{\partial \lambda}}{{\partial {\mu _i}}} \cdot \Delta {\mu _i} } \nonumber\\
&+\sum\limits_{i = 1}^{N - 1} {\frac{{\partial \lambda}}{{\partial {\eta _i}}} \cdot \Delta {\eta _i}} ,\\
\text{s.t.}{\kern 5pt}&0 \le {\theta _m} \!+\! \Delta {\theta _m} \le 2\pi ,\left| \Delta {\theta _m}\right| \le\Delta {\bar\theta _m}, \forall m \in \mathcal M,\\
&0 \le {\mu _i} + \Delta {\mu _i} \le 1,\left| \Delta {\mu _i}\right| \le\Delta {\bar\mu _i}, {\kern 15pt}\forall i \in \mathcal N,\\
&0 \le {\eta _i} + \Delta {\eta _i} \le 1,\left|\Delta {\eta _i}\right| \le\Delta {\bar\eta _i}, {\kern 19pt}\forall i \in \mathcal N,\\
&\sum\nolimits_{i = 1}^{N - 1} {\Delta {\eta _i} = 0},
\end{align}
\end{subequations}
where $\Delta {\boldsymbol{\Theta }}$, $\Delta {\boldsymbol{\mu }}$, and $\Delta {\boldsymbol{\eta }}$ denote the adjustment vectors of the phase-shifts of the IRS, the power allocation of the MDs, and the system bandwidth allocation, respectively. Constraint (19e) ensures that the total bandwidth remains unchanged. $\Delta {\boldsymbol{\bar \Theta }} = \{ \Delta {\bar \theta _m}\} _{m = 1}^M$, $\Delta {\boldsymbol{\bar \mu }} = \{ \Delta {\bar \mu _i}\} _{i = 1}^{N-1}$, and $\Delta {\boldsymbol{\bar \eta }} = \{ \Delta {\bar \eta _i}\} _{i = 1}^{N-1}$ are defined as the upper bounds of adjustments to smooth the updating procedure.

Based on the above description, a joint phase-shifts, power allocation, and bandwidth allocation optimization (JPPBO) algorithm is proposed as summarized in Algorithm 1.\footnote{In Agorithm 1, $\Delta {\boldsymbol{\munderbar\Theta }}$, $\Delta {\boldsymbol{\munderbar\mu}}$, and $\Delta {\boldsymbol{\munderbar\eta }}$ are the predefined lower bounds of $\Delta {\boldsymbol{\bar\Theta }}$, $\Delta {\boldsymbol{\bar\mu }}$, and $\Delta {\boldsymbol{\bar\eta }}$, respectively, and $\tau$ ($0<\tau<1$) is a predefined scaling factor.}

\begin{algorithm} [!htb]  
    \caption{The proposed JPPBO algorithm}  
   \begin{algorithmic}[1] 
       \State {\bf{Input:}} ${\boldsymbol{\Theta }},{\boldsymbol{\mu }},{\boldsymbol{\eta }}$, $\Delta {\boldsymbol{\bar\Theta }}$, $\Delta {\boldsymbol{\bar\mu }}$, $\Delta {\boldsymbol{\bar\eta }}$, $\Delta {\boldsymbol{\munderbar\Theta }}$, $\Delta {\boldsymbol{\munderbar\mu}}$, $\Delta {\boldsymbol{\munderbar\eta }}$.
       \State Compute the initial value of ${\tilde F_{s,d}^0}$.
       \For {$t=1,2,\ldots,T_{max}$}
             \State Compute the gradient ${\rm{\{ }}\partial \lambda/\partial {x_k}{\rm{\} }}_{k = 1}^{M + 2(N - 1)}$.
             \State Compute $\Delta {\boldsymbol{\Theta }}$, $\Delta {\boldsymbol{\mu }}$, and $\Delta {\boldsymbol{\eta }}$ by solving (19).
             \State Generate temporary variables: ${\boldsymbol{\hat \Theta }}\leftarrow {\boldsymbol{\Theta }}{\rm{ + }}\Delta {\boldsymbol{\Theta }}$, \par ${\boldsymbol{\hat \mu }} \leftarrow {\boldsymbol{\mu }}{\rm{ + }}\Delta {\boldsymbol{\mu }}$, and ${\boldsymbol{\hat\eta }} \leftarrow {\boldsymbol{\eta }}{\rm{ + }}\Delta {\boldsymbol{\eta }}$.
             \State Precompute the magnitude of max-flow ${\tilde F_{s,d}^{pre}}$.
              \If {${\tilde F_{s,d}^{pre}} \ge {\tilde F_{s,d}^{t-1}}$}
                  \State Update the variables: ${\boldsymbol{\Theta }} \leftarrow {\boldsymbol{\hat \Theta }}$, ${\boldsymbol{\mu }} \leftarrow {\boldsymbol{\hat \mu }}$, ${\boldsymbol{\eta }} \leftarrow {\boldsymbol{\hat \eta }}$.
                  \State Update ${\tilde F_{s,d}^t} \leftarrow  {\tilde F_{s,d}^{pre}}$, and record ${\tilde F_{s,d}^t}$.
               \Else
                  \State Adjust the step size: $\Delta {\boldsymbol{\bar \Theta }} \leftarrow \max \{ \tau \Delta {\boldsymbol{\bar \Theta }},\Delta {\boldsymbol{\munderbar\Theta }}\}$, \par \quad $\Delta {\boldsymbol{\bar \mu }} \!\leftarrow\! \max \{ \tau \Delta {\boldsymbol{\bar \mu }},\Delta {\boldsymbol{\munderbar\mu }}\}$, $\Delta {\boldsymbol{\bar \eta }} \!\leftarrow\! \max \{ \tau \Delta {\boldsymbol{\bar \eta }},\Delta {\boldsymbol{\munderbar\eta }}\}$.
                  \State Go to step 5.
              \EndIf
       \EndFor
       \State {\bf{Output:}} ${\boldsymbol{\Theta }},{\boldsymbol{\mu }},{\boldsymbol{\eta }}$.
    \end{algorithmic}  
\end{algorithm}  

\begin{figure*} [t!]
\begin{minipage}[t]{0.33\textwidth}
\includegraphics[scale = 0.25]{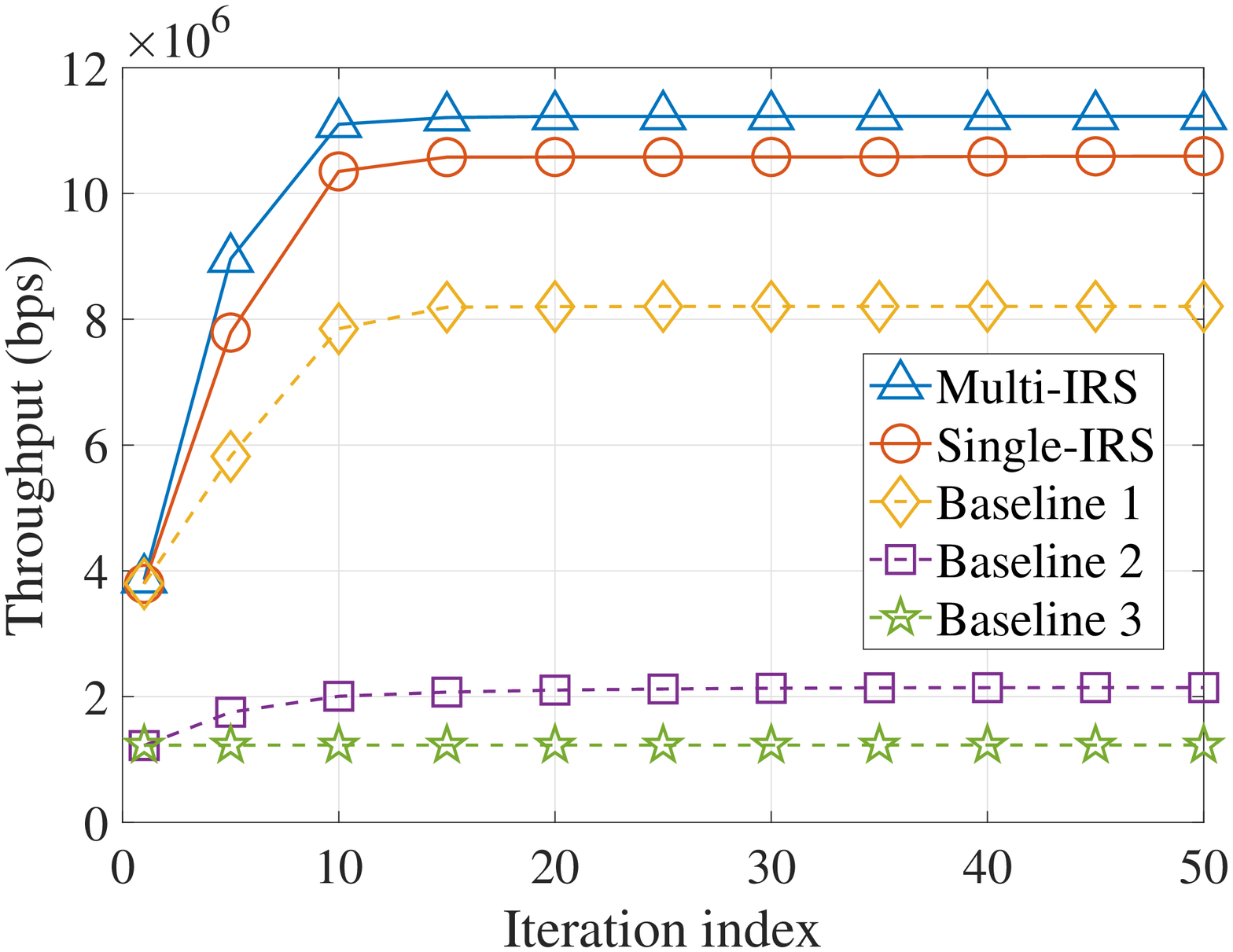}
\vspace{-.2cm}
\caption{Throughput v.s. number of iterations.}
\label{fig_sim}
\end{minipage}
\begin{minipage}[t]{0.33\textwidth}
\includegraphics[scale = 0.25]{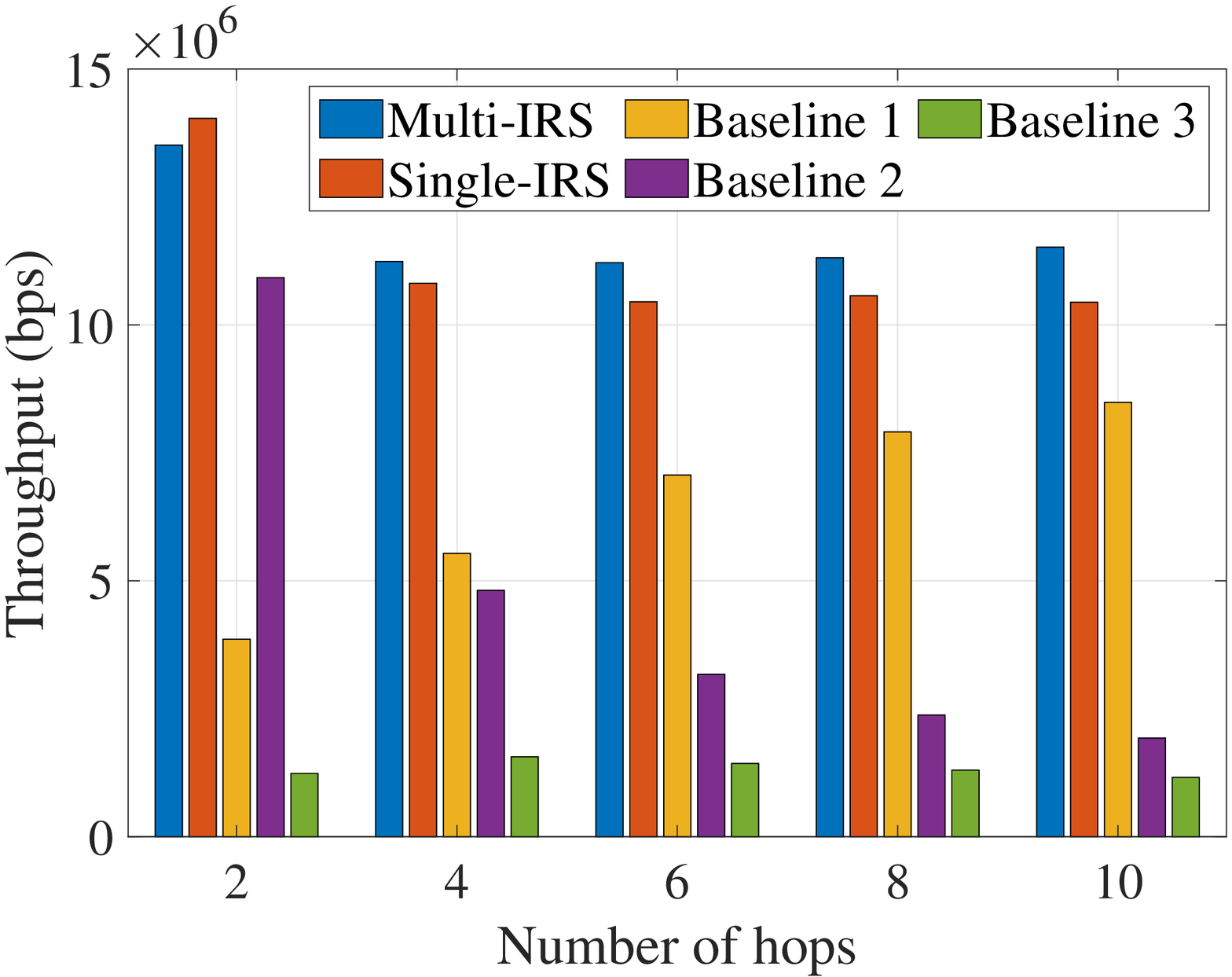}
\vspace{-.2cm}
\caption{Throughput v.s. number of hops.}
\label{fig_sim}
\end{minipage}
\begin{minipage}[t]{0.33\textwidth}
\includegraphics[scale = 0.25]{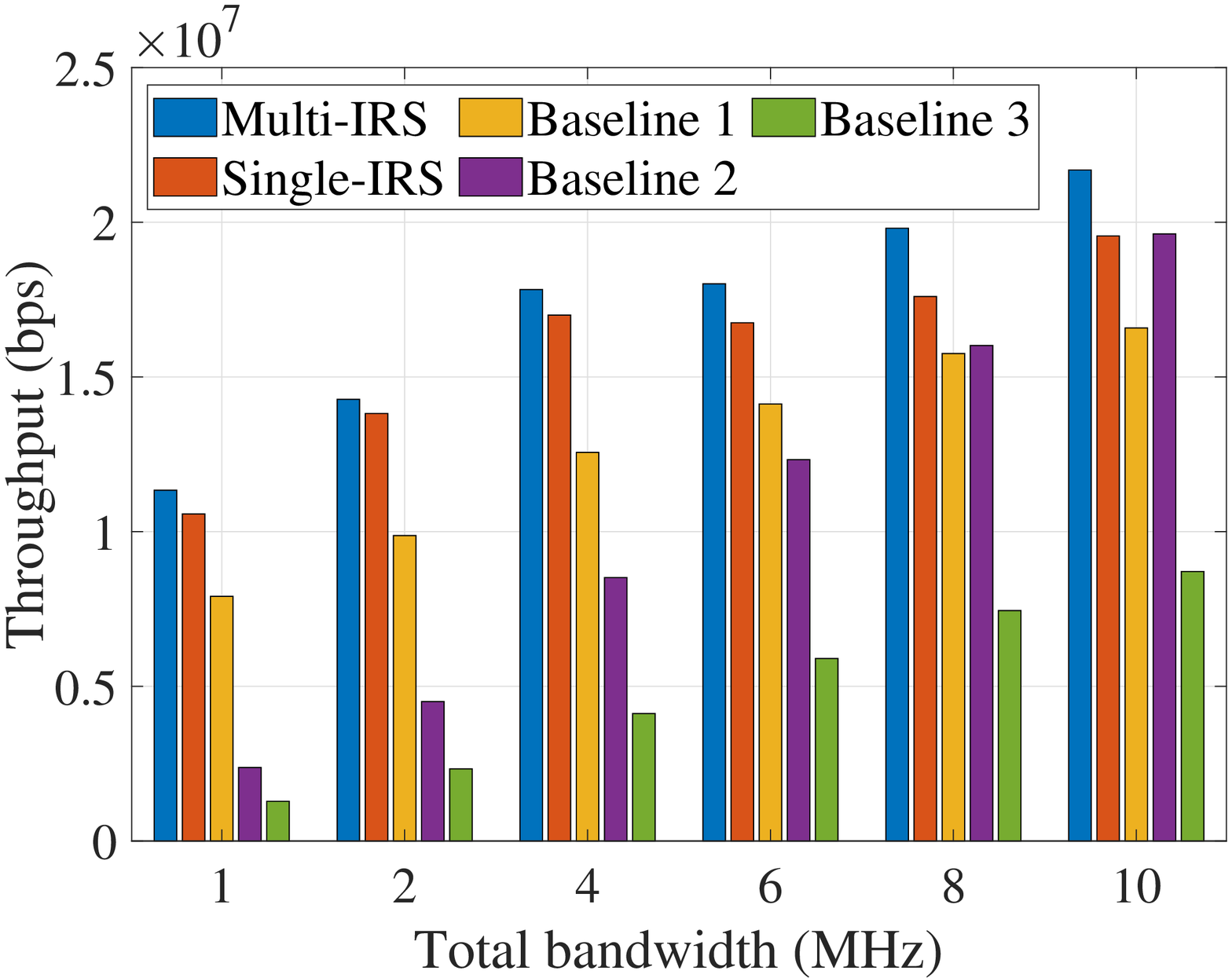}
\vspace{-.2cm}
\caption{Throughput v.s. total bandwidth.}
\label{fig_sim}
\end{minipage}
\end{figure*}

\section{Simulation Results}

\noindent In this section, numerical experiments are conducted to corroborate the effectiveness of the proposed scheme. 

The simulation parameters are as follows. The s-d distance is set to $L=1000$ (m), the distance from the s-d line to the y-z plane is set to $D=50$ (m). For the MDs, the total power for local computing and task offloading is set to $P_i=1$ (W). The altitude of the reference element of the IRS is set to $h_r=50$ (m), and the total number of reflection elements of the IRS is set to $M=10000$, with $M_z=M_y=100$. Since the IRS is assumed to be deployed on a favorable altitude such that the channels between the IRS and any of the MDs are dominated by LoS paths, the attenuation factor of these LoS channels is set to $\alpha_1=2$. The attenuation factor for the Rician fading channels, i.e., the channels among the MDs, is set to $\alpha_2=3.8$, since there are usually masses of obstacles and scatters on the ground. Other parameters are set as $\rho = -30$ (dB), $\beta = 3$ (dB), $\varepsilon=700$, and $\kappa=1 \times 10^{-28}$. The number of nodes and the total system bandwidth are set to $N=10$ and $B=1$ (MHz), respectively, unless otherwise stated. For each scheme, 1000 Monte-Carlo runs are conducted.

In the simulations, both the single- and multi-IRS cases are considered with identical number of reflection elements. Specifically, for the single-IRS case, the y-coordinate of the reference element of the IRS is set to $y_r=L/2$. For the multi-IRS case, it is assumed that there are $K=10$ IRSs, with each being a $50\times20$ UPA. These IRSs are assumed to be uniformly located above the projection of the s-d line in the y-axis. Specifically, for the $k$-th ($k=1,2,...,K$) IRS, the y-coordinate of its reference element is $y_r^k=(k-1)L/(K-1)$. In addition, to demonstrate the advantages of the proposed schemes, three baseline schemes are considered for comparison. Specifically, in baseline 1, the power allocation of the MDs and the system bandwidth allocation are jointly optimized without the aid of IRS. In baseline 2, no local computing is performed and the network nodes form a conventional multi-hop relay network assisted by a single-IRS. In baseline 3, the network nodes form a conventional multi-hop relay network without IRS.

First, the throughput performance of the proposed scheme is compared with that of the baseline schemes in  Fig. 3. It is observed that the proposed scheme can achieve a larger throughput as compared to the baseline schemes. The reason is that proper phase-shifts of the IRS and proper bandwidth allocation can improve the transmisnion rate for task offloding, while proper power allocation of the MDs can help strike a balance between local computing and task offloading. Specifically, the proposed scheme can achieve a throughput of about $11.2\times10^6$ (bps) for the multi-IRS case, and about $10.5\times10^6$ (bps) for the single-IRS case. Since there is no IRS deployed in baseline 1, the transmission rate for task offloading is not satisfactory, and hence it can only achieve a throughput of about $8.2\times10^6$ (bps). For baseline 2, although an IRS is deployed to improve the qualities of the transmission links, it can achieve a throughput of only about $2.1\times10^6$ (bps), as local computing is not performed in beseline 2. Baseline 3 gives the worst performance of about $1.2\times10^6$ (bps), since neither local computing nor IRS is considered.

The impact of the number of hops on the network throughput is shown in Fig. 4. Specifically, all parameters are kept unchanged, except for the number of hops. As it can be seen from Fig. 4, when the number of hops is $2$, the single-IRS scheme can achieve the largest throughput, bacause the single-IRS is located above the middle point of the projection of the s-d line in the y-axis, resulting in the shortest distance between the IRS and the relay node. The throughput achieved by the multi-IRS scheme is slightly smaller than that of the single-IRS scheme, because these multiple IRSs are deployed uniformly in a line, leading to relatively longer distances between the IRSs and the relay node. In addition, the achievable throughput of baseline 1 increases in the number of hops. This is because of that the number of nodes performing local computing is positively correlated with the number of hops. In contrast, the achievable throughput of baseline 2 decreases as the number of hops increases, as the nodes exhaust all their available power for task offloading in this scheme. In this case, the network throughput hinges on the transmission rates of all of the hops, which however decreases when more relays share the total bandwidth. For baseline 3, when the number of hops is small, the bandwidth allocated to each hop is relative abundant but the distance of each hop is relatively long. As the number of hops increases, less bandwidth is available to each hop but the distance of each hop becomes shorter. As a result, the throughput of baseline 3 remains unchanged at a relatively low value. In contrast, the proposed scheme can achieve substantially higher throughputs than that of the baselines, by concertedly exploiting the advantages of both local computing and IRS-assisted transmission.

Finally, the impact of the total bandwidth on the network throughput is investigated in the setting of an 8-hop network. As shown in Fig. 5, for all the schemes, the network throughputs increase as the total bandwidth increases. In particular, when the total bandwidth is sufficiently large, e.g., $B=10$ (MHz), the performance of the proposed scheme and that of baseline 2 are quite close. The reason is that the achievable throughput of the proposed scheme (single-IRS case) is dominated by the task offloading under this circumstance. However, when the available total bandwidth is limited, the throughputs of the proposed schemes are significantly larger than that of the baseline schemes.

\section{Conclusion}
\noindent In this paper, throughput optimization of an IRS-assisted multi-hop MEC network is investigated. Due to the coupling among the transmission links of different hops and the complicated multi-hop network topology, it is difficult to derive a closed-form expression of the network throughput. To tackle this challenge, the original problem is transformed into a max-flow problem in directed graph. By exploiting the special structure of the directed graph and the results from spectral graph theory, a joint phase-shifts, power allocation, and bandwidth allocation optimization algorithm is proposed to obtain a high-quality solution. Numerical results show that the proposed algorithm can achieve a substantially higher network throughput as compared to the baselines.

\bibliographystyle{IEEEtran}                             
\bibliography{IEEEabrv,references}

\end{document}